\newtheorem{theorem}{Theorem}
\theoremstyle{definition}
\newtheorem{definition}{Definition}
\theoremstyle{remark}
\theoremstyle{definition}
\theoremstyle{definition}
\newcommand{\R}{\mathbb{R}}
\newcommand{\C}{\mathcal{C}}
\definecolor{darkblue}{RGB}{0,0,102}
\definecolor{lightblue}{RGB}{77,77,148}
\definecolor{gold}{RGB}{234, 170, 0}
\definecolor{metallic_gold}{RGB}{139, 111, 78}
\newcommand{\mb}[1]{\mathbf{ #1 }}
\newcommand{\bs}[1]{\boldsymbol{ #1 }}
\DeclareMathOperator*{\argmin}{argmin}
\newcommand{\derp}[2]{\frac{\partial #1 }{\partial #2 }}
\newcommand{\x}{\mathbf{x}}
\newcommand{\lmat}{\begin{bmatrix}}
\newcommand{\rmat}{\end{bmatrix}}
\newcommand{\hphi}{\overline{h}}
\title{\LARGE \bf
 Measurement-Robust Control Barrier Functions: 
\\ Certainty in Safety with Uncertainty in State 
}
\author{Ryan K. Cosner, Andrew W. Singletary, Andrew J. Taylor, Tamas G. Molnar, \\ Katherine L. Bouman, and Aaron D. Ames
\thanks{This research is supported in part by the National Science Foundation, CPS Award \#1932091, as well as Aerovironment. A.J. Taylor is supported by DARPA award HR00111890035.}
\thanks{
R. K. Cosner, A. W. Singletary, T. G. Molnar, and A. D. Ames are with the Department of Mechanical and Civil Engineering, California Institute of Technology, Pasadena, CA 91125, USA, {\tt\small \{rkcosner, asinglet, tmolnar, ames\}@caltech.edu}.}
\thanks{
A. J. Taylor and K. L. Bouman are with Computing and Mathematical Sciences, California Institute of Technology, Pasadena, CA 91125, USA, {\tt\small \{ajtaylor, klbouman\}@caltech.edu}.}
}
\begin{document}

\maketitle
\thispagestyle{empty}
\pagestyle{empty}

\begin{abstract}
The increasing complexity of modern robotic systems and the environments they operate in necessitates the formal consideration of safety in the presence of imperfect measurements. In this paper we propose a rigorous framework for safety-critical control of systems with erroneous state estimates. We develop this framework by leveraging Control Barrier Functions (CBFs) and unifying the method of Backup Sets for synthesizing control invariant sets with robustness requirements---the end result is the synthesis of \emph{Measurement-Robust Control Barrier Functions (MR-CBFs)}.
This provides theoretical guarantees on safe behavior in the presence of imperfect measurements and improved robustness over standard CBF approaches. We demonstrate the efficacy of this framework both in simulation and experimentally on a Segway platform using an onboard stereo-vision camera for state estimation.


\end{abstract}

\section{INTRODUCTION}

Safety is of utmost importance in many modern control applications, including autonomous vehicles, medical and industrial robotics \cite{knight2002safety}.
The growing complexity of these systems demands that safety properties are rigorously encoded in the controller design.
Such systems are typically described as safe if their state never leaves a prescribed \textit{safe set},
and
\textit{Control Barrier Functions (CBFs)} \cite{ames2017control, borrmann2015control} have become increasingly popular \cite{ames2019control,nguyen2016exponential} as a tool for achieving
safety.
In this paper, we focus on two challenges related to safety-critical control realized via CBFs: finding admissible inputs and making these inputs robust to uncertainty.

The first challenge is guaranteeing that a safe control input is always available.
If safe control actions exist---i.e., satisfy input constraints---over the entire safe set, the set is called \textit{control invariant} \cite{aubin2011viability}.  Yet control invariance is not guaranteed in general---safe actions may not exist for all points in a given safe set.
Therefore, identifying control invariant sets is critically important for implementing safety-critical controllers in robotic systems. 
Hamilton-Jacobi reachability analysis can be performed to compute such sets \cite{bansal2017hamilton}, but is intractable for high dimensional systems.
Here we adapt the method of  \textit{Backup Sets} introduced in \cite{gurriet2020scalable} as a computationally tractable way of achieving control invariance.

\begin{figure}[!ht]
\centering 
    \vspace{0.65cm}
    \includegraphics[width=\linewidth]{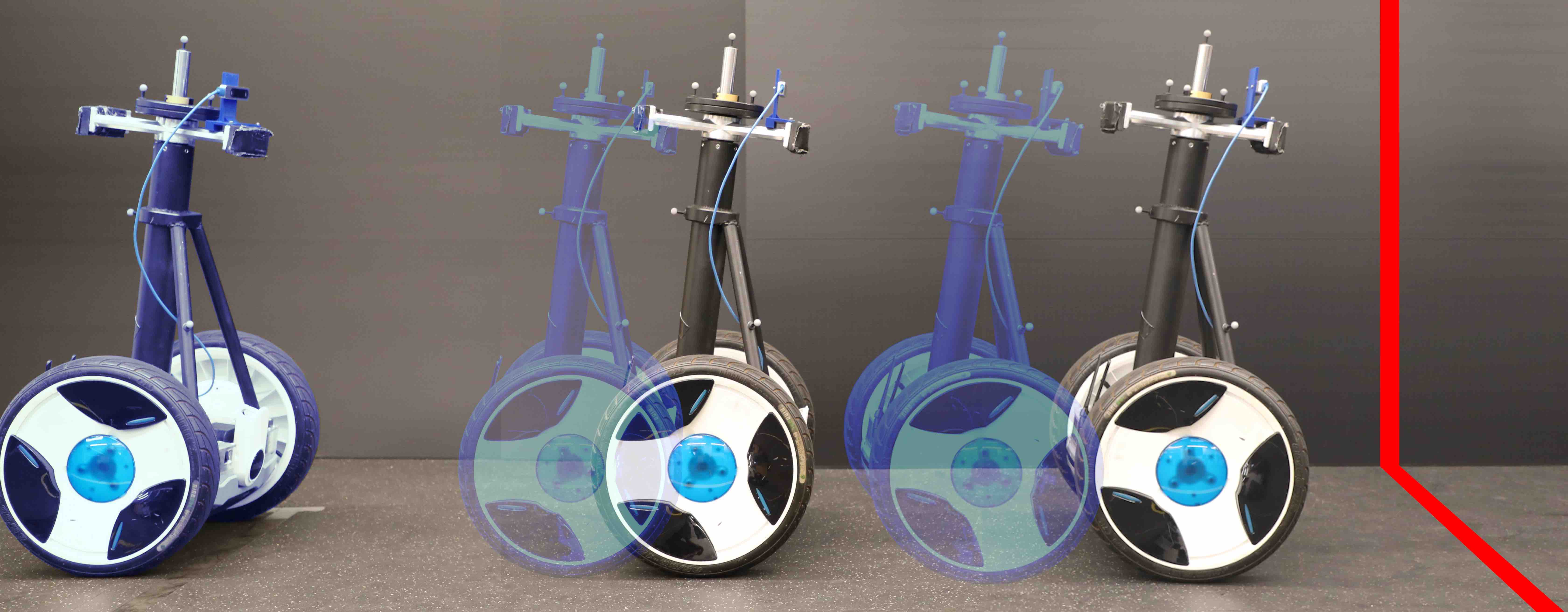}
    \caption{Visualization of desired Segway behavior. The Segway is driven from left to right and must not cross the red line. The transparent blue images of the Segway represent the measured position whereas the opaque images represent the true position. Traditional CBFs do not account for this uncertainty.}
    \label{fig:segway}
    \vspace{-0.5cm}
\end{figure}

The second challenge is that controllers rely on state measurements that are often imperfect or uncertain---especially for dynamic robotic systems.
This can cause unsafe behavior if not accounted for in the control design and, as such, has been addressed from multiple perspectives. 
The work in \cite{jankovic2018robust,takano2018robust} considers robust CBF formulations with worst-case disturbance bounds to achieve safety.
Safety guarantees in the presence of measurement noise are addressed from a stochastic perspective in \cite{clark2019control, nilsson2020lyapunov}.
Controllers robust to state estimation errors were proposed for sampled-data-systems via an interval-arithmetic condition in \cite{singletary2020control} and for continuous systems via estimate-error bounding in \cite{dean2020guaranteeing}.
In \cite{dean2020guaranteeing} safety and robustness
were enforced
by \textit{Measurement-Robust Control Barrier Functions (MR-CBFs)}.
This approach was inspired by vision-based control \cite{codevilla2018end,lambert2018deep,tang2018aggressive}, where state information is observed through a complex transformation.




This paper presents a safety-critical control framework that allows for the synthesis of control invariant sets that are robust to measurement uncertainty, all with a view toward experimental realization.  
The main contributions of this work are twofold.
Firstly, we integrate the method of Backup Sets for ensuring control invariance \cite{gurriet2020scalable} with the framework of MR-CBFs \cite{dean2020guaranteeing}.
This leads to practically achievable safety guarantees even in the presence of measurement uncertainty, establishing measurement-robust safety-critical control.
Secondly, we present the first experimental demonstration of both MR-CBFs and the proposed method
by controlling the motion of a Segway using camera data.
The experiments validate the robust safety guarantees provided by our method.






\section{Preliminaries} \label{sec:background}

First we provide a review of safety-critical control through Control Barrier Functions (CBFs) and synthesis of control invariant sets via the Backup Set method. 

\subsection{Control Barrier Functions}

Consider the nonlinear control affine system given by: 
\begin{equation}
    \dot{\x} = \mb{f}(\x) + \mb{g}(\x) \mb{u}, \quad \x \in \R^n , \;\mb{u} \in \R^m, \label{eq:dynamics}
\end{equation}
where ${\mb{f}: \R^n \to \R^n}$ and ${\mb{g}: \R^n \to \R^{n\times m}}$ are locally Lipschitz continuous.
Given a locally Lipschitz continuous controller ${\mb{k}: \R^n \to \R^m}$,  the closed-loop dynamics are: 
\begin{equation}
    \dot{\x} = \mb{f}_\textrm{cl}(\x) = \mb{f}(\x) + \mb{g}(\x)\mb{k}(\x), \label{eq:cl}
\end{equation}
where $\mathbf{f}_\textrm{cl}:\R^n\to\R^n$ is also locally Lipschitz continuous.
Therefore, for any initial condition ${\x(0) = \x_0 \in \R^n}$ there exists an interval $I(\x_0) \triangleq [0,t_\textrm{max})$ such that $\x(t)$ is the unique solution to \eqref{eq:cl} for $t\in I(\x_0)$ \cite{perko2013differential}.
Throughout this paper we assume $I(\mb{x}_0) = [0, \infty)$.

The notion of safety is formalized by defining a \textit{safe set} $\mathcal{C} \subset \R^n $ 
in the state space that the system must remain within.
In particular, consider the set $\mathcal{C}$ as the $0$-superlevel set of a continuously differentiable function $h : \R^n \to \R$: 
\begin{align}
\begin{split}
    \mathcal{C} & \triangleq \{ \x\in \R^n : h(\x) \geq 0 \}, \\
    \partial \mathcal{C} & \triangleq \{ \x \in \R^n : h(\x) = 0 \}, \\
    \textrm{Int}(\mathcal{C}) & \triangleq \{ \x \in \R^n : h(\x) > 0 \}.
\end{split}
\label{eq:safe_set}
\end{align}
We assume that zero is a regular value of $h$ and $\mathcal{C}$ is non-empty and has no isolated points, that is, $h(\x) = 0 \implies \frac{\partial h }{\partial \x}(\x)  \neq 0  $, $\textrm{Int}(\mathcal{C})\neq \emptyset $, and $\overline{\textrm{Int}(\mathcal{C})} = \mathcal{C}$. 
In this context, safety is synonymous with the forward invariance of $\mathcal{C}$:

\begin{definition}[\textit{Forward Invariance and Safety}]
    A set $\mathcal{C}\subset \R^n$ is \textit{forward invariant} if for every $\x_0 \in \mathcal{C}$, the solution to (\ref{eq:cl}) satisfies $\mb{x}(t) \in \mathcal{C}$ for all $t\geq 0$. The closed-loop system \eqref{eq:cl} is \textit{safe} with respect to set $\mathcal{C}$ if $\mathcal{C}$ is forward invariant.
\end{definition}

We call a continuous function ${\alpha: \R \to \R}$ as extended class-$\mathcal{K}_{\infty}$ ($\mathcal{K}_{\infty,e}$) if it is strictly monotonically increasing and satisfies ${\alpha(0) = 0}$, ${\lim_{r \to -\infty}\alpha(r) = -\infty}$, and ${\lim_{r \to \infty} \alpha (r) = \infty}$. Control Barrier Functions (CBF) can be used to synthesize controllers ensuring the safety of the closed-loop system \eqref{eq:cl} with respect to a given set $\mathcal{C}$. 
\begin{definition}[\textit{Control Barrier Function (CBF)}, \cite{ames2017control}]\label{def:cbf}
Let ${\C\subset\R^n}$ be a safe set given by~(\ref{eq:safe_set}).
The function $h$ is a \textit{Control Barrier Function} (CBF) for \eqref{eq:dynamics} on $\mathcal{C}$ if there exists $\alpha\in\mathcal{K}_{\infty,e}$ such that for all $\mb{x}\in\C$:
\begin{equation}
\label{eqn:cbf}
     \sup_{\mb{u}\in\R^m} \dot{h}(\mb{x},\mb{u}) \triangleq \underbrace{\derp{h}{\mb{x}}(\mb{x})\mb{f}(\mb{x})}_{L_\mb{f}h(\mb{x})}+\underbrace{\derp{h}{\mb{x}}(\mb{x})\mb{g}(\mb{x})}_{L_\mb{g}h(\mb{x})}\mb{u}\geq-\alpha(h(\mb{x})),
\end{equation}
where $L_\mb{f}h: \R^n \to \R$ and $L_\mb{g}h: \R^n \to \R^m $ are the Lie derivatives of $h$ with respect to $\mb{f}$ and $\mb{g}$, respectively.
\end{definition}

Intuitively, the CBF constraint \eqref{eqn:cbf} requires a system to slow down as it approaches the boundary of the safe set (the right-hand side of \eqref{eqn:cbf} increases to $0$ as the value of $h$ approaches $0$). A main result in \cite{ames2014control, xu2015robustness} relates CBFs
to the safety of the closed-loop system \eqref{eq:cl} with respect to $\C$:
\begin{theorem}\label{thm:cbf_safe}
Given a safe set $\C\subset\R^n$,
if $h$ is a CBF for \eqref{eq:dynamics} on $\C$, then any locally Lipschitz continuous controller $\mb{k}:\R^n\to\R^m$
satisfying
\begin{equation}
L_\mb{f}h(\mb{x})+L_\mb{g}h(\mb{x})\mb{k}(\mb{x})\geq-\alpha(h(\mb{x}))
\end{equation}
for all $\mb{x}\in\C$, renders the system \eqref{eq:cl} safe w.r.t. $\C$.
\end{theorem}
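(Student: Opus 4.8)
The plan is to run the standard comparison-lemma argument that underlies Theorem~\ref{thm:cbf_safe}. First I would fix an arbitrary $\mb{x}_0\in\C$. Since $\mb{k}$ is locally Lipschitz, the closed-loop vector field $\mb{f}_\textrm{cl}$ in \eqref{eq:cl} is locally Lipschitz, so by the standing assumption there is a unique solution $\mb{x}(\cdot)$ of \eqref{eq:cl} defined on $[0,\infty)$, and it is $C^1$ because $t\mapsto\mb{f}_\textrm{cl}(\mb{x}(t))$ is continuous. Define the scalar signal $\lambda(t)\triangleq h(\mb{x}(t))$, which is $C^1$ with
\[
\dot\lambda(t)=\derp{h}{\mb{x}}(\mb{x}(t))\,\mb{f}_\textrm{cl}(\mb{x}(t))=L_\mb{f}h(\mb{x}(t))+L_\mb{g}h(\mb{x}(t))\mb{k}(\mb{x}(t)).
\]
By the hypothesis on $\mb{k}$, at every $t$ for which $\mb{x}(t)\in\C$ (equivalently $\lambda(t)\geq0$) this gives the differential inequality $\dot\lambda(t)\geq-\alpha(\lambda(t))$.

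Next I would compare $\lambda$ against the scalar initial value problem $\dot y=-\alpha(y)$ with $y(0)=\lambda(0)=h(\mb{x}_0)\geq0$. Because $\alpha\in\mathcal{K}_{\infty,e}$ we have $\alpha(0)=0$ and $\alpha(r)$ has the same sign as $r$; hence $y\equiv0$ is the forward solution from $y(0)=0$, and any forward solution from $y(0)\geq0$ stays nonnegative for all $t\geq0$ (it is nonincreasing while positive and cannot cross the equilibrium at the origin). The comparison lemma then yields $\lambda(t)\geq y(t)\geq0$, i.e. $h(\mb{x}(t))\geq0$ and $\mb{x}(t)\in\C$, for all $t\geq0$. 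Since $\mb{x}_0\in\C$ was arbitrary, $\C$ is forward invariant and hence \eqref{eq:cl} is safe with respect to $\C$ by the definition of forward invariance and safety.

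The hard part is that the differential inequality $\dot\lambda\geq-\alpha(\lambda)$ is only guaranteed while the trajectory is in $\C$, so the comparison lemma cannot simply be applied on all of $[0,\infty)$; one must rule out the trajectory exiting $\partial\C$. I would resolve this either (i) by working on the maximal interval $[0,\tau)$ on which $\lambda\geq0$, applying the comparison estimate there, and then arguing by contradiction: if $\tau<\infty$ then continuity forces $\lambda(\tau)=0$, but $\mb{x}(\tau)\in\partial\C\subseteq\C$ gives $\dot\lambda(\tau)\geq-\alpha(\lambda(\tau))=-\alpha(0)=0$, together with a Nagumo-type subtangency estimate this prevents $\lambda$ from becoming negative immediately after $\tau$, contradicting maximality; or (ii) by observing directly that at any $\mb{x}\in\partial\C$ the condition reduces to $L_\mb{f}h(\mb{x})+L_\mb{g}h(\mb{x})\mb{k}(\mb{x})\geq0$, which is exactly the subtangentiality condition for the closed set $\C=\{h\geq0\}$ --- here using that $0$ is a regular value of $h$, so the tangent cone to $\C$ at $\mb{x}\in\partial\C$ is $\{v:\derp{h}{\mb{x}}(\mb{x})v\geq0\}$ while it is all of $\R^n$ on $\textrm{Int}(\C)$ --- and invoking Nagumo's theorem for forward invariance. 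A second, more minor, technicality is that $\alpha$ is only assumed continuous and strictly increasing, not locally Lipschitz, so the comparison equation need not have unique solutions; this is handled by using the version of the comparison lemma phrased through the minimal solution of $\dot y=-\alpha(y)$, or simply by the sign argument above, which never uses uniqueness.
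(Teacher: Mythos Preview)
The paper does not actually prove Theorem~\ref{thm:cbf_safe}; it simply cites it as a known result from \cite{ames2014control, xu2015robustness}. Your proposal follows exactly the standard comparison-lemma/Nagumo argument found in those references and is essentially correct, including your careful flagging of the two technical subtleties (the differential inequality being guaranteed only while the trajectory lies in $\C$, and $\alpha$ being merely continuous so that the comparison ODE may lack unique solutions).

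One minor remark on your route~(i): the sentence ``$\dot\lambda(\tau)\geq 0$, together with a Nagumo-type subtangency estimate this prevents $\lambda$ from becoming negative immediately after $\tau$'' is carrying real weight that would need to be spelled out---the pointwise bound $\dot\lambda(\tau)\geq0$ alone does not preclude $\lambda$ from going negative right after $\tau$ (e.g.\ $\lambda(t)=-(t-\tau)^2$ has $\dot\lambda(\tau)=0$). Your route~(ii) via Nagumo's theorem is the cleaner and more standard path, and you correctly invoke the paper's regularity assumption that $0$ is a regular value of $h$ to identify the tangent cone on $\partial\C$; this is precisely how the cited references close the argument.
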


Given a nominal (but not necessarily safe) locally Lipschitz continuous controller $\mb{k}_d:\R^n\to\R^m$ and a CBF $h$, the CBF-Quadratic Program \eqref{eqn:CBF-QP} \cite{ames2017control} ensures safety:
\begin{align}
\label{eqn:CBF-QP}
\tag{CBF-QP}
\mb{k}(\mb{x}) =  \,\,\underset{\mb{u} \in \R^m}{\argmin}  &  \quad \frac{1}{2} \| \mb{u} -\mb{k}_d(\mb{x})\|_2^2  \\
\mathrm{s.t.} \quad & \quad L_\mb{f}h(\mb{x}) + L_\mb{g}h(\mb{x}) \mb{u} \geq - \alpha (h(\mb{x})). \nonumber
\end{align}

\subsection{Generating Control Invariant Sets via Backup Sets}



To guarantee that a safe control action exists, one needs to ensure the existence of a function $h$ satisfying the CBF condition \eqref{eqn:cbf}.
For a given safe-set $\C$, fulfilling this requirement can be nontrivial and potentially impossible. To this end, we restrict our focus to a set $\mathcal{C}_I \subseteq \mathcal{C}$ which is control invariant:  


%
%

\begin{definition}[\textit{Control Invariance}]
    A set $\mathcal{C}_I\subseteq\C$ is \textit{control invariant} if there exists a controller $\mb{k}: \R^n \to \R^m$ such that  $\mathcal{C}_I$ is forward invariant with respect to the system \eqref{eq:cl}. 
\end{definition}




While directly computing control invariant sets remains challenging in general, we may define one implicitly via a backup set \cite{gurriet2020scalable}. Consider a desired safe set $\C\subset\R^n$,
which is not necessarily control invariant. Suppose there exists a set $\mathcal{C}_B\subset\C$, defined as the $0$-superlevel set of a continuously differentiable function $h_B:\R^n\to\R$, which is known \textit{a priori} to be control invariant and can be rendered forward invariant by a known locally Lipschitz continuous \textit{backup controller} $\mb{k}_B:\R^n\to\R^m$.
We refer to $\C_B$ as the \textit{backup set}. 
For simple backup controllers (such as linear state feedback controllers designed for the linearization of a system) it is possible to find analytical expressions for local regions of attraction to serve as backup sets. Alternatively, numerical tools such as Sums-of-Squares (SOS) may be used to synthesize control invariant 
sets \cite{tan2006stability}. 

We extend the backup set to a larger control invariant set ${\C_I\subset\R^n}$, satisfying ${\C_B\subseteq\C_I\subseteq\C}$, by considering the \textit{backup trajectory} over a finite and fixed time  ${T\in\R_{>0}}$  as follows.
By assumption, for any $\mb{x}\in\R^n$ there exists a unique solution $\bs{\varphi}:[0,T]\to\R^n$ satisfying:
\begin{align}
\begin{split}
    \frac{\mathrm{d}}{\mathrm{d}\tau}\bs{\varphi}(\tau) & = \mb{f}(\bs{\varphi}(\tau)) + \mb{g}(\bs{\varphi}(\tau))\mb{k}_B(\bs{\varphi}(\tau)), \\
    \bs{\varphi}(0) & = \mb{x}.
    \end{split}
\end{align}
The solution $\bs{\varphi}$ may be interpreted as the evolution of the system over the interval $[0,T]$ from a state, $\mb{x}$, under the backup controller $\mb{k}_B$. In particular, the current state $\mb{x}(t)$ may be used as initial condition in specifying $\bs{\varphi}$. We denote $\phi_{\tau}^{\mb{k}_b}(\mb{x}) \triangleq \bs{\varphi}(\tau)$ for the initial condition $\mb{x}$.

Using this notation, we may define the set $\C_I\subseteq\C$ as:
\begin{equation} 
\mathcal{C}_I = \left\{\begin{array}{l|c}
        & h(\phi_\tau^{\mb{k}_B}(\mb{x})) \geq 0, \forall \tau \in [0, T ]\\
        \mb{x} \in \C &  \textrm{and} \\
        &  h_B(\phi_T^{\mb{k}_B}(\mb{x})) \geq 0 
        \end{array}\right\}. \label{eq:C_I}
\end{equation}
The first inequality implies safety under the backup policy (${\phi^{\mb{k}_B}_\tau(\mb{x})\in\mathcal{C}}$ for all ${\tau \in [0, T]}$), and the second inequality implies the backup trajectory reaches $\mathcal{C}_B$ by time $T$ (${\phi^{\mb{k}_B}_T(\mb{x})\in\mathcal{C}_B}$). The set $\mathcal{C}_I$ is thus control invariant as there exists at least one controller, $\mb{k}_B$, which renders it forward invariant.  
While $\mathcal{C}_I$ is not necessarily the largest control invariant subset of $\mathcal{C}$
(see \textit{viability kernel}, \cite{aubin2011viability}),
the backup sets provide a computationally tractable method for finding an under-approximation of the largest control invariant set.

For notational simplicity, we define the continuously differentiable functions $\hphi_\tau:\R^n\to\R$ and $\hphi_B:\R^n\to\R$ as:
\begin{align}
    \hphi_\tau(\mb{x}) & \triangleq h(\phi_\tau^{\mb{k}_B}(\mb{x})), & \hphi_B(\mb{x}) & \triangleq h_B(\phi_T^{\mb{k}_B}(\mb{x})).
\end{align}
Given these definitions, the CBF condition \eqref{eqn:cbf} can then be specified for the set $\mathcal{C}_I$ at a point $\mb{x}\in\C_I$ as follows:
\begin{align}
\begin{split}
    L_\mb{f}\hphi_\tau (\mb{x})  + L_\mb{g}\hphi_\tau (\mb{x})\mb{u} &\geq - \alpha (\hphi_\tau(\mb{x}) ), \quad \forall \tau \in [0, T], \\
    L_\mb{f}\hphi_B (\mb{x})  + L_\mb{g}\hphi_B (\mb{x}) \mb{u}&\geq - \alpha (\hphi_B(\mb{x}) ).
\end{split}
\label{eq:backup_set_const}
\end{align}
Any locally Lipschitz continuous controller that takes values satisfying \eqref{eq:backup_set_const} for all $\mb{x}\in \C_I$ will keep the closed loop system \eqref{eq:cl} safe with respect to $\mathcal{C}_I$; see \cite[p. 6]{gurriet2018online}.

We note that enforcing the first constraint in~(\ref{eq:backup_set_const}) is not necessarily tractable as it must hold for all ${\tau \in [0, T]}$. To resolve this, it can be reduced to a finite collection of more conservative constraints through constraint tightening. A controller which implements the finite number of tightened constraints, and thus renders \eqref{eq:cl} safe with respect to $\C_I$, is given by the Backup Set Quadratic Program (BS-QP): 
\begin{align}
\label{eqn:BS-QP}
\tag{BS-QP}
\mb{k}(\mb{x}) =   \,\,\underset{\mb{u} \in \R^m}{\argmin}    &\quad  \frac{1}{2} \| \mb{u} -\mb{k}_d(\mb{x})\|_2^2  \nonumber \\
\mathrm{s.t.} \quad & L_\mb{f}\hphi_{\tau_j} (\mb{x})  + L_\mb{g}\hphi_{\tau_j} (\mb{x})\mb{u} \geq - \alpha ({h}_{\tau_j}(\mb{x}) - \mu), \nonumber\\
& L_\mb{f}\hphi_B (\mb{x})  + L_\mb{g}\hphi_B (\mb{x})\mb{u} \geq - \alpha ({h}_B(\mb{x})),  \nonumber
\end{align}
for all ${\tau_j \in \{ 0, \Delta_t, \dots, T\}}$, where ${\Delta_t \in \R_{>0}}$ is a time-step such that ${T/\Delta_t \in \mathbb{N}}$,
%
%
while $\mu \in \R_{>0}$ satisfies:
\begin{equation}
\mu\geq \frac{\Delta_t}{2} \mathfrak{L}_h \sup_{\mb{x}\in \mathcal{C}}\Vert \mb{f}(\mb{x}) + \mb{g}(\mb{x})\mb{k}_B(\mb{x})\Vert_2, \label{eq:varep}
\end{equation}
with ${\mathfrak{L}_h\in\R_{>0}}$ a Lipschitz constant for $h$ on $\C$ \cite[Thm. 1]{gurriet2020scalable}.




\begin{figure*}[!ht]
\centering 
    \includegraphics[width=0.23\linewidth]{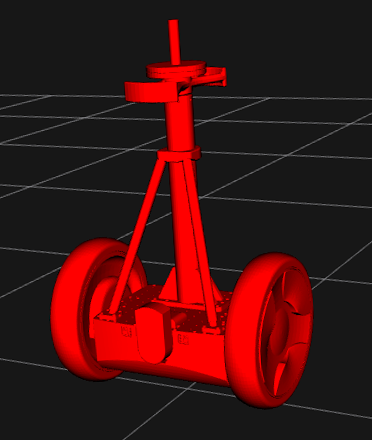}       \includegraphics[height=1.9in, width=0.37\linewidth]{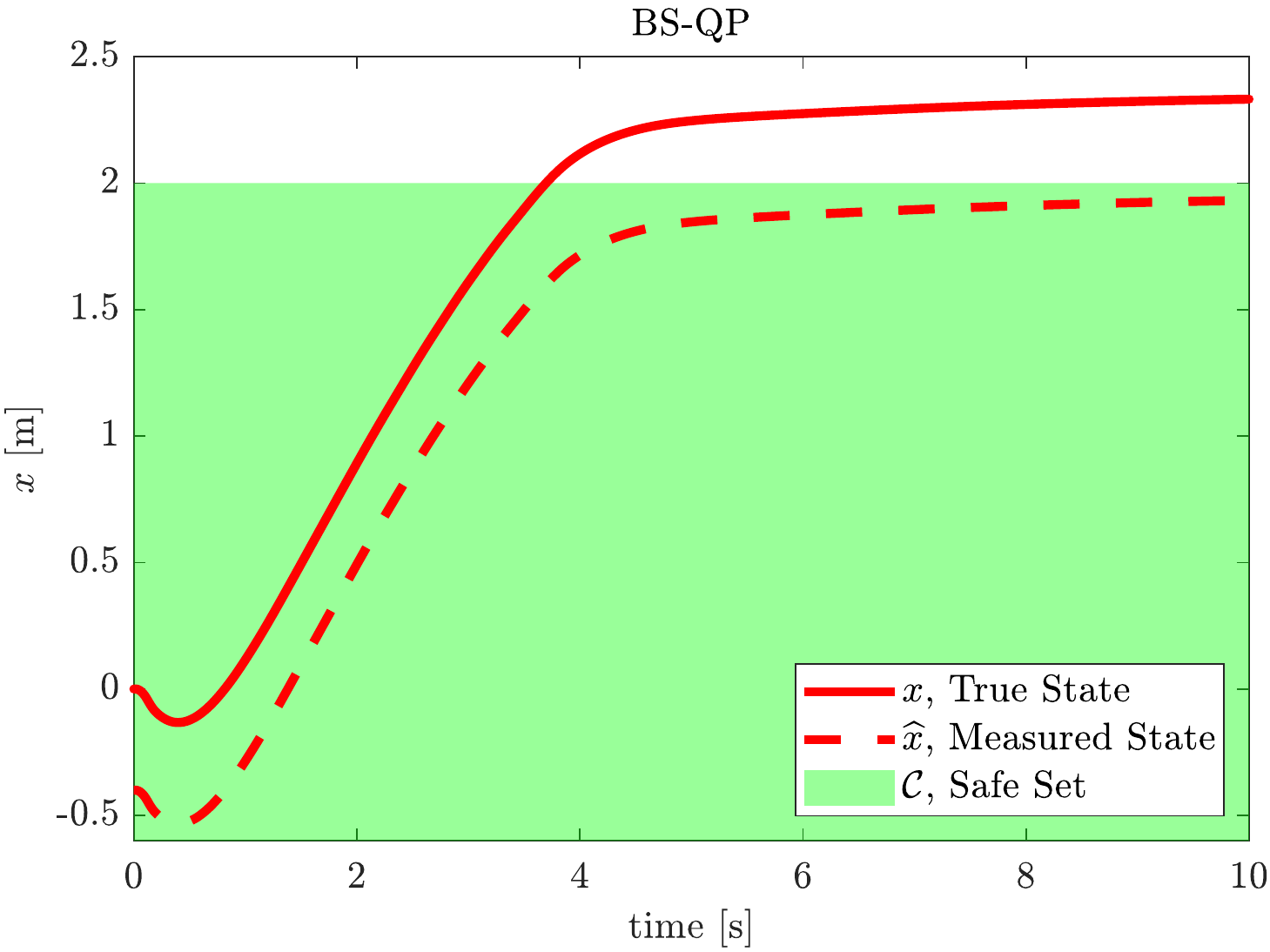}    \includegraphics[height=1.9in,width=0.37\linewidth]{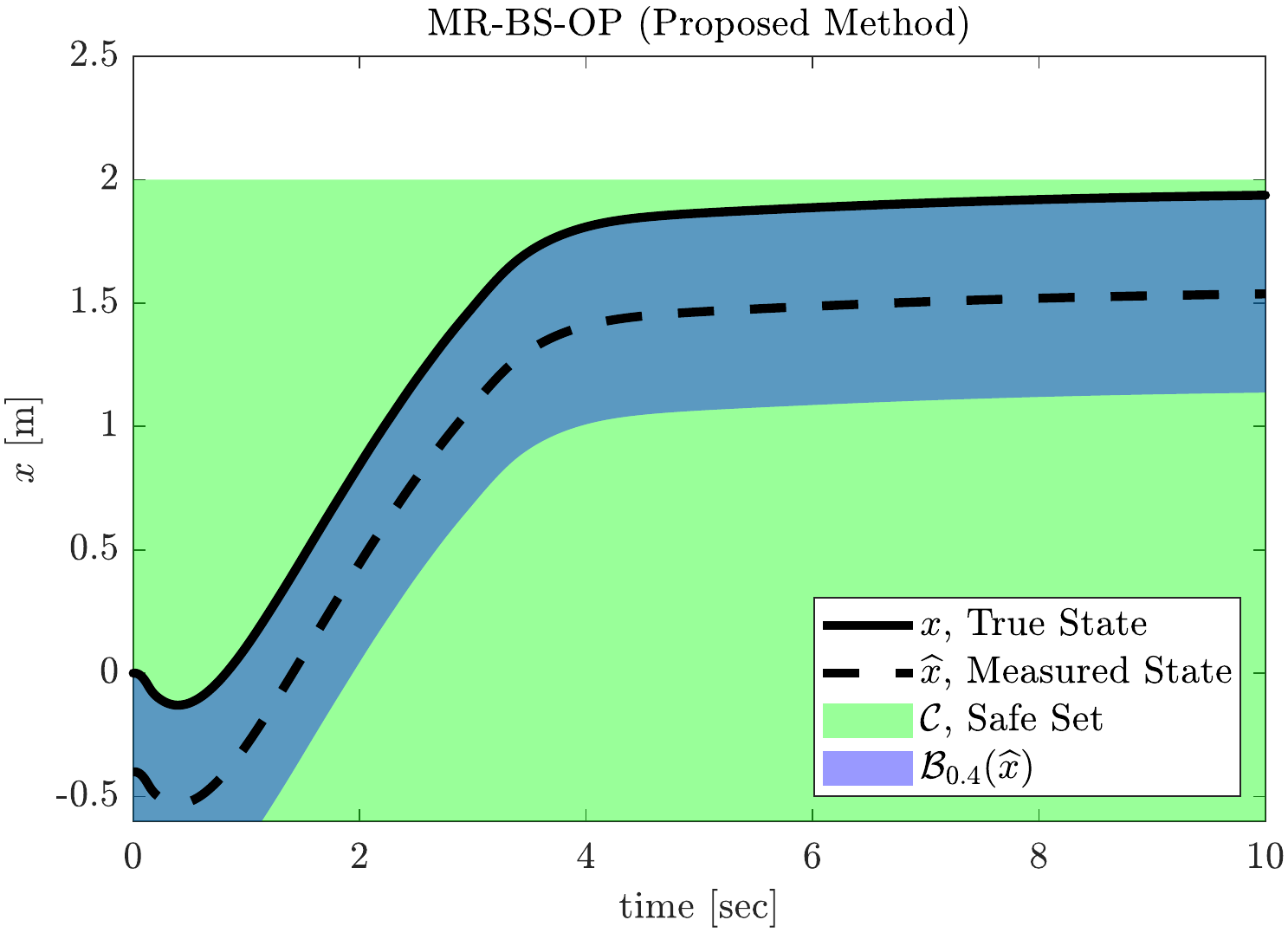}
    \caption{Simulation results for a measurement model of $\widehat{x} = x - 0.4 \textrm{ m}$ and constant desired velocity of 1 $\textrm{m/s}$. \textbf{(Left)} An image of the simulated Segway model. \textbf{(Center)} Trajectories generated using the \ref{eqn:BS-QP}.
    Solid line represents the true state, dashed line shows the estimated state, and green region indicates the safe set $\mathcal{C}$. The true trajectory fails to be safe and exits the safe set at $t=3$ s. \textbf{(Right)} Trajectories generated using the \ref{eq:MRBS-OP}.
    An additional robustness region is plotted in blue to indicate the set of of true states which the control input renders safe. Both the true and measured trajectories are safe demonstrating the robustness of the \ref{eq:MRBS-OP} when compared to the \ref{eqn:BS-QP}.}
    \label{fig:sim} 
    \vspace{-0.5cm}
\end{figure*}
\section{Measurement Robustness }

\label{sec:MR}
The guarantees endowed by the above controllers require perfect knowledge of the state $\mb{x}$, which often is an unrealistic assumption in practice. In particular, the relationship between the state of the system and the measurements, such as images or point clouds, can be complex and not fully known \cite{codevilla2018end, lambert2018deep, tang2018aggressive}. In this section we revisit measurement-model uncertainty and present our main result in the form of a measurement-robust version of the \ref{eqn:BS-QP}. 

\subsection{Measurement-Model Uncertainty}
To achieve robustness, we consider a structured form of measurement-model uncertainty that modifies the CBF condition \eqref{eqn:cbf} \cite{dean2020guaranteeing}. We assume the state $\mb{x}$ is not directly available, but rather a state-dependent sensor measurement: 
\begin{equation}
    \label{eq:measure}
    \mb{y} = \mb{p}(\mb{x}),
\end{equation}
where $\mb{y}\in\R^k$ and ${\mb{p}: \R^n \to \R^k}$ is locally Lipschitz continuous.
An estimate of the state, $\widehat{\mb{x}}\in\R^n$, is reconstructed from $\mb{y}$ (such as through measurement models or data-driven methods \cite{lambert2018deep,tang2018aggressive}). In particular, we assume the map from measurements to state estimates is imperfect (does not recover the true state exactly), and is given by the locally Lipschitz continuous function $\widehat{\mb{q}}: \R^k \to \R^n$ as follows:
\begin{equation}
\label{eqn:state_error}
    \widehat{\mb{x}} \triangleq \widehat{\mb{q}}(\mb{y}) = \mb{x}+\mb{e}(\mb{x}),
\end{equation}
where the state error function ${\mb{e}:\R^n\to\R^n}$ is unknown and implicitly defined by $\widehat{\mb{q}}$. 

The error function $\mb{e}$ can often be characterized via upper bounds on measurement-model uncertainty. In particular, we assume that while the state error $\mb{e}(\mb{x})$ is not known for a given state $\mb{x}\in\R^n$, it is within a compact error set ${\mathcal{E}(\mb{y})}$ specified by a set-valued function ${\mathcal{E}:\R^k\to\mathcal{P}(\R^n)}$
($\mathcal{P}$ denotes the power set), that is,
we have ${\mb{e}(\mb{x}) \in \mathcal{E}(\mb{y}) = \mathcal{E}(\mb{p}(\mb{x}))}$.
%
%
The error set can be conservatively characterized via the function ${\epsilon:\R^k\to\R_{\geq 0}}$ defined as:
\begin{equation}
    \epsilon(\mb{y}) \triangleq \max_{\mb{e} \in \mathcal{E}(\mb{y}) }\Vert \mb{e} \Vert_2. \label{eq:epsilon}
\end{equation}

Since the controller only has access to the measurement and the state estimate, systems with measurement-model uncertainty evolve according to: 
\begin{equation}
    \dot{\mb{x}} = \mb{f}(\mb{x}) + \mb{g}(\mb{x}) \mb{k}(\mb{y}, \widehat{\mb{x}}). \label{eq:mu_ode}
\end{equation}
The error bound can be used to synthesize controllers which render such systems provably safe as follows \cite{dean2020guaranteeing}:
\begin{theorem}
Given a safe set $\C\subset\R^n$, assume that $L_\mb{f}h$, $L_\mb{g}h$, and ${\alpha \circ h}$ are Lipschitz continuous on $\C$ with Lipschitz constants $\mathfrak{L}_{L_\mb{f}h}, \mathfrak{L}_{L_\mb{g}h}$, and ${\mathfrak{L}_{\alpha \circ h }\in\R_{\geq 0}}$, respectively. Define the function ${\epsilon:\R^k\to\R_{\geq 0}}$ as in \eqref{eq:epsilon}, and define the functions ${a,b:\R^k\to\R_{\geq 0}}$ as ${a(\mb{y}) = (\mathfrak{L}_{L_\mb{f}h} + \mathfrak{L}_{\alpha \circ h})\epsilon(\mb{y})}$ and ${b(\mb{y}) = \mathfrak{L}_{L_\mb{g}h}\epsilon (\mb{y})}$. If $\mb{k}: \R^k \times \R^n \to \R^m$ is a Lipschitz continuous controller satisfying: 
\begin{multline}
    L_\mb{f}h(\widehat{\mb{x}}) + L_\mb{g}h(\widehat{\mb{x}}) \mb{k}(\mb{y}, \widehat{\mb{x}}) \label{eq:mrcbf_condition} \\ 
    - (a(\mb{y}) + b(\mb{y})\Vert \mb{k}(\mb{y}, \widehat{\mb{x}}) \Vert_2) \geq - \alpha (h(\widehat{\mb{x}}))
\end{multline}
for all ${\mb{x} \in \mathcal{C}}$, with ${\mb{y}=\mb{p}(\mb{x})}$ and ${\widehat{\mb{x}} = \widehat{\mb{q}}(\mb{y})}$, then the system \eqref{eq:mu_ode} is safe with respect to $\C$. 
\end{theorem}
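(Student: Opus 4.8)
The plan is to show that, even though the feedback law is evaluated at the corrupted estimate $\widehat{\mb{x}}$, the function $h$ still satisfies $\dot h \ge -\alpha(h(\mb{x}))$ along the \emph{true} closed-loop trajectories of \eqref{eq:mu_ode}, after which Theorem~\ref{thm:cbf_safe} delivers forward invariance of $\C$. First I would record the estimation-error bound: for any $\mb{x}\in\R^n$ with $\mb{y}=\mb{p}(\mb{x})$ and $\widehat{\mb{x}}=\widehat{\mb{q}}(\mb{y})$, equations \eqref{eqn:state_error} and \eqref{eq:epsilon} together with $\mb{e}(\mb{x})\in\mathcal{E}(\mb{y})$ give $\Vert\widehat{\mb{x}}-\mb{x}\Vert_2=\Vert\mb{e}(\mb{x})\Vert_2\le\epsilon(\mb{y})$. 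I would also observe that, since $\mb{p}$, $\widehat{\mb{q}}$ and $\mb{k}$ are locally Lipschitz, the composed closed-loop map $\mb{x}\mapsto\mb{k}(\mb{p}(\mb{x}),\widehat{\mb{q}}(\mb{p}(\mb{x})))$ is locally Lipschitz, so \eqref{eq:mu_ode} is an instance of \eqref{eq:cl} with unique, forward-complete solutions.

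Next, I would fix $\mb{x}\in\C$, write $\mb{u}=\mb{k}(\mb{y},\widehat{\mb{x}})$, and note that the derivative of $h$ along \eqref{eq:mu_ode} is $\dot h = L_\mb{f}h(\mb{x})+L_\mb{g}h(\mb{x})\mb{u}$ at the true state. Adding and subtracting the estimate-evaluated quantities,
\[
\dot h + \alpha(h(\mb{x})) = \big(L_\mb{f}h(\widehat{\mb{x}})+L_\mb{g}h(\widehat{\mb{x}})\mb{u}+\alpha(h(\widehat{\mb{x}}))\big) + \Delta_1 + \Delta_2 + \Delta_3,
\]
with $\Delta_1 = L_\mb{f}h(\mb{x})-L_\mb{f}h(\widehat{\mb{x}})$, $\Delta_2 = (L_\mb{g}h(\mb{x})-L_\mb{g}h(\widehat{\mb{x}}))\mb{u}$, and $\Delta_3 = \alpha(h(\mb{x}))-\alpha(h(\widehat{\mb{x}}))$. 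Using the three Lipschitz hypotheses, the error bound $\Vert\widehat{\mb{x}}-\mb{x}\Vert_2\le\epsilon(\mb{y})$, and the Cauchy--Schwarz inequality for $\Delta_2$, I would lower bound $\Delta_1\ge-\mathfrak{L}_{L_\mb{f}h}\epsilon(\mb{y})$, $\Delta_2\ge-\mathfrak{L}_{L_\mb{g}h}\epsilon(\mb{y})\Vert\mb{u}\Vert_2$, and $\Delta_3\ge-\mathfrak{L}_{\alpha\circ h}\epsilon(\mb{y})$, so that $\Delta_1+\Delta_2+\Delta_3\ge-(a(\mb{y})+b(\mb{y})\Vert\mb{u}\Vert_2)$ by the definitions of $a$ and $b$. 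Substituting this in and invoking \eqref{eq:mrcbf_condition}, rearranged as $L_\mb{f}h(\widehat{\mb{x}})+L_\mb{g}h(\widehat{\mb{x}})\mb{u}+\alpha(h(\widehat{\mb{x}}))-(a(\mb{y})+b(\mb{y})\Vert\mb{u}\Vert_2)\ge0$, yields $\dot h + \alpha(h(\mb{x}))\ge0$, i.e., $L_\mb{f}h(\mb{x})+L_\mb{g}h(\mb{x})\mb{u}\ge-\alpha(h(\mb{x}))$ for every $\mb{x}\in\C$.

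The last step would be to conclude forward invariance. The inequality just derived shows that $h$ is a CBF for \eqref{eq:dynamics} on $\C$ and that the locally Lipschitz closed-loop controller $\mb{x}\mapsto\mb{k}(\mb{p}(\mb{x}),\widehat{\mb{q}}(\mb{p}(\mb{x})))$ meets the hypothesis of Theorem~\ref{thm:cbf_safe}, hence \eqref{eq:mu_ode} is safe with respect to $\C$; equivalently, along any solution with $h(\mb{x}(0))\ge0$ a comparison-lemma argument applied to $\dot h\ge-\alpha(h)$ gives $h(\mb{x}(t))\ge0$ for all $t\ge0$.

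The step I expect to be the main obstacle is making the Lipschitz estimates in the second paragraph rigorous: when $\mb{x}$ lies near $\partial\C$ the estimate $\widehat{\mb{x}}$ need not belong to $\C$, so the constants $\mathfrak{L}_{L_\mb{f}h}$, $\mathfrak{L}_{L_\mb{g}h}$, $\mathfrak{L}_{\alpha\circ h}$ should really be valid on an $\bar\epsilon$-inflation of $\C$, where $\bar\epsilon=\sup_{\mb{x}\in\C}\epsilon(\mb{p}(\mb{x}))$, or the argument restricted to states for which both $\mb{x}$ and $\widehat{\mb{x}}$ remain in $\C$; I would state this caveat explicitly when introducing the bounds. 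A secondary, routine point is the local existence/uniqueness and completeness of solutions of \eqref{eq:mu_ode}, which follows from local Lipschitzness of the composed feedback map and the standing assumption $I(\mb{x}_0)=[0,\infty)$.
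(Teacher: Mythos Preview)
Your proposal is correct and follows essentially the same route the paper (and the cited source \cite{dean2020guaranteeing}) uses. The paper does not give a self-contained proof of this particular theorem---it defers to \cite{dean2020guaranteeing}---but the identical Lipschitz-perturbation argument appears verbatim in the paper's proof of Theorem~\ref{thm:main}: define the ``slack'' $c(\mb{x},\mb{u})=L_\mb{f}h(\mb{x})+L_\mb{g}h(\mb{x})\mb{u}+\alpha(h(\mb{x}))$, write $c(\mb{x},\mb{u})=c(\widehat{\mb{x}},\mb{u})+(c(\mb{x},\mb{u})-c(\widehat{\mb{x}},\mb{u}))$, bound the difference by $a(\mb{y})+b(\mb{y})\Vert\mb{u}\Vert_2$ via the three Lipschitz constants, and invoke \eqref{eq:mrcbf_condition} to conclude $c(\mb{x},\mb{u})\geq 0$, then apply Theorem~\ref{thm:cbf_safe}. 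Your $\Delta_1,\Delta_2,\Delta_3$ decomposition is exactly this. Your closing caveat about $\widehat{\mb{x}}$ possibly leaving $\C$ (so the Lipschitz constants should hold on an $\epsilon$-inflation of $\C$) is well observed and is indeed glossed over in the paper's treatment as well.
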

\noindent A continuously differentiable function ${h:\R^n \to \R}$ for which such a controller exists is termed a Measurement-Robust Control Barrier Function (MR-CBF) \cite{dean2020guaranteeing}. As compared to the original CBF constraint \eqref{eqn:cbf}, the MR-CBF constraint \eqref{eq:mrcbf_condition} adds additional terms incorporating bounds on the measurement error that ensure the system is safe to all possible states following from a given measurement. The original CBF constraint is recovered in the absence of measurement error ($\epsilon(\mb{y}) = 0$).

\subsection{Measurement-Robust Backup Set Optimization Program}
\label{sec:combine}
In this section we present our main result in the form of a safety-critical control paradigm that is robust to measurement uncertainty. This is accomplished by unifying the Backup Set method with MR-CBFs, using the MR-CBF condition \eqref{eq:mrcbf_condition} the finite set of constraints imposed in the \ref{eqn:BS-QP} become:
\begin{align}
\begin{split}
      L_\mb{f}\hphi_{\tau_j} & (\widehat{\mb{x}}) + L_\mb{g}\hphi_{\tau_j} (\widehat{\mb{x}})\mb{u}\\
      & - (a_{\tau_j}(\mb{y}) + b_{\tau_j}(\mb{y})\Vert \mb{u} \Vert_2)
      \geq -\alpha(\hphi_{\tau_j} (\widehat{\mb{x}}) - \mu), \\
      L_\mb{f}\hphi_B & (\widehat{\mb{x}}) + L_\mb{g}\hphi_B (\widehat{\mb{x}})\mb{u} \\
      & - (a_B(\mb{y}) + b_B(\mb{y})\Vert \mb{u} \Vert_2)
      \geq -\alpha(\hphi_B (\widehat{\mb{x}})),
\end{split}
\label{eq:MRCBF_SDS}
\end{align}
with parameter functions: 
\begin{align}
\begin{split}
    a_{\tau_j}(\mb{y}) &= (\mathfrak{L}_{L_\mb{f}\hphi_{\tau_j}} + \mathfrak{L}_{\alpha} \mathfrak{L}_{\hphi_{\tau_j}}) \epsilon(\mb{y}), \quad
    b_{\tau_j}(\mb{y}) = \mathfrak{L}_{L_\mb{g}{\hphi_{\tau_j}}}\epsilon(\mb{y}), \\
    a_B(\mb{y}) &= (\mathfrak{L}_{L_\mb{f}\hphi_B} +  \mathfrak{L}_{\alpha}\mathfrak{L}_{\hphi_B}) \epsilon (\mb{y}), \quad
    b_B(\mb{y}) = \mathfrak{L}_{L_\mb{g}\hphi_B} \epsilon(\mb{y}),
\end{split}
\label{eq:parameter_funcs}
\end{align}
for all ${\tau_j \in \{0, \Delta_t, \dots, T\}}$, with $\epsilon(\mb{y}) $ defined as in \eqref{eq:epsilon} 
and $\mathfrak{L}$ represents the Lipschitz constant of its subscripted function on $\R^n$. These constructions enable the following definition:
\begin{definition}[\textit{Measurement-Robust Implicit Safe Set}]
The set $\mathcal{C}_I \subseteq \mathcal{C} \subseteq \R^n $ defined as in \eqref{eq:C_I} is a \textit{Measurement-Robust Implicit Safe Set} (MRISS) for the error bound $\epsilon:\R^k\to\R_{\geq 0}$ with parameter functions $(a_0,b_0, \dots, a_{\Delta_t}, b_{\Delta_t},a_B,b_B):\R^k\to\R_{\geq 0}$  if: 
\begin{itemize}
    \item the functions $\{ \hphi_{0}, \hphi_{\Delta_t}, \dots, \hphi_T, \hphi_B\} $, their Lie derivatives, and $\alpha$ are Lipschitz continuous on $\C_I$, 
    \item the constant $\mu \in \R_{\geq 0 }$ satisfies~(\ref{eq:varep}), 
    \item and for all ${\mb{x} \in \mathcal{C}_I}$ 
    there exists $\mb{u} \in \R^m$ satisfying \eqref{eq:MRCBF_SDS}. 
\end{itemize}
\end{definition}
Next, using this definition, we show that the safety of such sets can be made robust to measurement model uncertainty. 
\begin{theorem} 
\label{thm:main}
    Given a MRISS $\mathcal{C}_I$, if ${\mb{k}: \R^k \times \R^n \to \R^m}$ is a Lipschitz continuous controller that satisfies \eqref{eq:MRCBF_SDS} with parameter functions \eqref{eq:parameter_funcs} for all $ {\mb{x} \in \mathcal{C}_I } $ 
    with ${\mb{y}=\mb{p}(\mb{x})}$ and ${\widehat{\mb{x}} = \widehat{\mb{q}}(\mb{y})}$,
    then system \eqref{eq:mu_ode} is safe with with respect to $\mathcal{C}_I$. 
    
\end{theorem}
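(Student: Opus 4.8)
The plan is to deduce the result from two ingredients already in hand: the measurement‑robustness estimate that underlies the MR‑CBF theorem, and the backup‑set safety result \cite[Thm.~1]{gurriet2020scalable} that underlies the \ref{eqn:BS-QP}. The observation is that the constraint system \eqref{eq:MRCBF_SDS}, once evaluated along the \emph{true} state rather than the estimate, collapses exactly onto the constraints of the \ref{eqn:BS-QP} with the same tightening $\mu$; forward invariance of $\mathcal{C}_I$ then follows from \cite{gurriet2020scalable}.

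First I would record well‑posedness of the closed loop: since $\mb{k}$ is Lipschitz and $\mb{p},\widehat{\mb{q}},\mb{f},\mb{g}$ are locally Lipschitz, the vector field $\mb{x}\mapsto\mb{f}(\mb{x})+\mb{g}(\mb{x})\mb{k}(\mb{p}(\mb{x}),\widehat{\mb{q}}(\mb{p}(\mb{x})))$ is locally Lipschitz, so \eqref{eq:mu_ode} admits unique solutions (complete under the standing assumption). Then fix an arbitrary $\mb{x}\in\mathcal{C}_I$, write $\mb{y}=\mb{p}(\mb{x})$, $\widehat{\mb{x}}=\widehat{\mb{q}}(\mb{y})$, $\mb{u}=\mb{k}(\mb{y},\widehat{\mb{x}})$, and note that \eqref{eqn:state_error} and \eqref{eq:epsilon} give $\Vert\widehat{\mb{x}}-\mb{x}\Vert_2=\Vert\mb{e}(\mb{x})\Vert_2\le\epsilon(\mb{y})$.

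The core step is a constraint‑transfer estimate carried out separately for each of the functions $\{\hphi_0,\hphi_{\Delta_t},\dots,\hphi_T,\hphi_B\}$. Using the Lipschitz constants in \eqref{eq:parameter_funcs} (taken over all of $\R^n$, hence valid at both $\mb{x}$ and the possibly exterior estimate $\widehat{\mb{x}}$), I would bound, for each $\tau_j$, the quantities $|L_\mb{f}\hphi_{\tau_j}(\mb{x})-L_\mb{f}\hphi_{\tau_j}(\widehat{\mb{x}})|\le\mathfrak{L}_{L_\mb{f}\hphi_{\tau_j}}\epsilon(\mb{y})$, $\Vert L_\mb{g}\hphi_{\tau_j}(\mb{x})-L_\mb{g}\hphi_{\tau_j}(\widehat{\mb{x}})\Vert\le\mathfrak{L}_{L_\mb{g}\hphi_{\tau_j}}\epsilon(\mb{y})$ (then pair against $\mb{u}$ via Cauchy--Schwarz), and $|\alpha(\hphi_{\tau_j}(\mb{x})-\mu)-\alpha(\hphi_{\tau_j}(\widehat{\mb{x}})-\mu)|\le\mathfrak{L}_{\alpha}\mathfrak{L}_{\hphi_{\tau_j}}\epsilon(\mb{y})$, the constant shift by $\mu$ not affecting the Lipschitz constant. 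Substituting these into the $\tau_j$‑row of \eqref{eq:MRCBF_SDS} and using $a_{\tau_j}(\mb{y})=(\mathfrak{L}_{L_\mb{f}\hphi_{\tau_j}}+\mathfrak{L}_{\alpha}\mathfrak{L}_{\hphi_{\tau_j}})\epsilon(\mb{y})$ and $b_{\tau_j}(\mb{y})=\mathfrak{L}_{L_\mb{g}\hphi_{\tau_j}}\epsilon(\mb{y})$, the $b_{\tau_j}$ term exactly cancels the $L_\mb{g}$ error, one piece of $a_{\tau_j}$ cancels the $L_\mb{f}$ error, and the remaining piece $\mathfrak{L}_{\alpha}\mathfrak{L}_{\hphi_{\tau_j}}\epsilon(\mb{y})$ absorbs the $\alpha$ error, giving $L_\mb{f}\hphi_{\tau_j}(\mb{x})+L_\mb{g}\hphi_{\tau_j}(\mb{x})\mb{u}\ge-\alpha(\hphi_{\tau_j}(\mb{x})-\mu)$. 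The identical manipulation on the $\hphi_B$‑row (with no $\mu$) yields $L_\mb{f}\hphi_B(\mb{x})+L_\mb{g}\hphi_B(\mb{x})\mb{u}\ge-\alpha(\hphi_B(\mb{x}))$. Since $\mb{x}\in\mathcal{C}_I$ was arbitrary, the locally Lipschitz feedback $\mb{x}\mapsto\mb{k}(\mb{p}(\mb{x}),\widehat{\mb{q}}(\mb{p}(\mb{x})))$ satisfies, at every true state in $\mathcal{C}_I$, exactly the constraint set of the \ref{eqn:BS-QP}.

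Finally I would invoke \cite[Thm.~1]{gurriet2020scalable} (equivalently the argument of \cite[p.~6]{gurriet2018online}): with $\mu$ chosen as in \eqref{eq:varep}, a locally Lipschitz controller whose value at each $\mb{x}\in\mathcal{C}_I$ meets the finite, $\mu$‑tightened trajectory constraints together with the backup‑set constraint renders $\mathcal{C}_I$ forward invariant; hence \eqref{eq:mu_ode} is safe with respect to $\mathcal{C}_I$. I expect the main difficulty to be bookkeeping rather than conceptual: one must keep the estimate $\widehat{\mb{x}}$ within the domain on which the Lipschitz constants of \eqref{eq:parameter_funcs} hold — precisely why those constants are taken over $\R^n$ instead of over $\mathcal{C}_I$ — and one must defer to \cite{gurriet2020scalable} for the nontrivial reduction of the continuum of constraints $\tau\in[0,T]$ in \eqref{eq:backup_set_const} to their sampled, $\mu$‑tightened form, which is not reproved here.
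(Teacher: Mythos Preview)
Your proposal is correct and matches the paper's own proof essentially line for line: both arguments use the Lipschitz bounds on $L_\mb{f}\hphi$, $L_\mb{g}\hphi$, and $\alpha\circ\hphi$ to transfer each row of \eqref{eq:MRCBF_SDS} from the estimate $\widehat{\mb{x}}$ back to the true state $\mb{x}$, recovering the \ref{eqn:BS-QP} constraints at $\mb{x}$, and then invoke the backup-set safety result from \cite{gurriet2020scalable} (the paper cites Lemma~2 there rather than Theorem~1, but the content is the same). Your added remarks on well-posedness of the closed loop and on why the Lipschitz constants must be taken over $\R^n$ are sound embellishments the paper leaves implicit.
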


\begin{proof}
    For any function $\hphi \in \{ \hphi_{0}, \hphi_{\Delta_t}, \dots, \hphi_{T}, \hphi_B \} $ let
    \begin{equation}
        c(\mb{x}, \mb{k}(\mb{y}, \widehat{\mb{x}})) = L_\mb{f}\hphi(\mb{x}) + L_\mb{g}\hphi(\mb{x})\mb{k}(\mb{y}, \widehat{\mb{x}}) + \alpha (\hphi(\mb{x}) - \nu), \nonumber
    \end{equation}
    where we choose ${\nu = \mu}$ if $\hphi  = \hphi_{\tau_j}$ and ${\nu = 0}$ if $\hphi = \hphi_B$. It follows by Lipschitz continuity that: 
    \begin{align}
    \begin{split}
        \Vert L_\mb{f}\hphi(\widehat{\mb{x}}) - L_\mb{f}\hphi(\mb{x})\Vert_2 &\leq \mathfrak{L}_{L_\mb{f}\hphi}\epsilon(\mb{y}),\\
        \Vert \alpha(\hphi(\widehat{\mb{x}}) - \nu)  - \alpha(\hphi(\mb{x}) - \nu) \Vert_2 &\leq \mathfrak{L}_{\alpha} \mathfrak{L}_{\hphi} \epsilon(\mb{y}), \\
        \Vert L_\mb{g}\hphi(\widehat{\mb{x}}) - L_\mb{g}\hphi(\mb{x})\Vert_2 \Vert\mb{k}(\mb{y}, \widehat{\mb{x}}) \Vert_2 &\leq \mathfrak{L}_{L_\mb{g}\hphi} \epsilon(\mb{y}) \Vert \mb{k}(\mb{y}, \widehat{\mb{x}}) \Vert_2. \nonumber
    \end{split}
    \end{align}
    As $\mb{k}$ satisfies \eqref{eq:MRCBF_SDS}, we have that:
    \begin{align}
        c(\mb{x}, & \mb{k}(\mb{y}, \widehat{\mb{x}})) \nonumber \\
        & = c(\widehat{\mb{x}}, \mb{k}(\mb{y}, \widehat{\mb{x}})) + c(\mb{x}, \mb{k}(\mb{y}, \widehat{\mb{x}})) -  c(\widehat{\mb{x}}, \mb{k}(\mb{y}, \widehat{\mb{x}})) \nonumber \\
        & \geq c(\widehat{\mb{x}}, \mb{k}(\mb{y}, \widehat{\mb{x}})) - (a(\mb{y}) + b(\mb{y}) \Vert \mb{k}(\mb{y}, \widehat{\mb{x}}) \Vert_2) 
         \geq 0. \nonumber
    \end{align}
Since ${c(\mb{x}, \mb{k}(\mb{y}, \widehat{\mb{x}})) \geq 0}$ and $\mu$ satisfies \eqref{eq:varep}, we have that the system \eqref{eq:mu_ode} is safe with respect to $\mathcal{C}_I$ by \cite[Lemma 2]{gurriet2020scalable}.
\end{proof}

\begin{figure*}[!ht]
\centering
    \includegraphics[height=1.8in,width=0.23\linewidth]{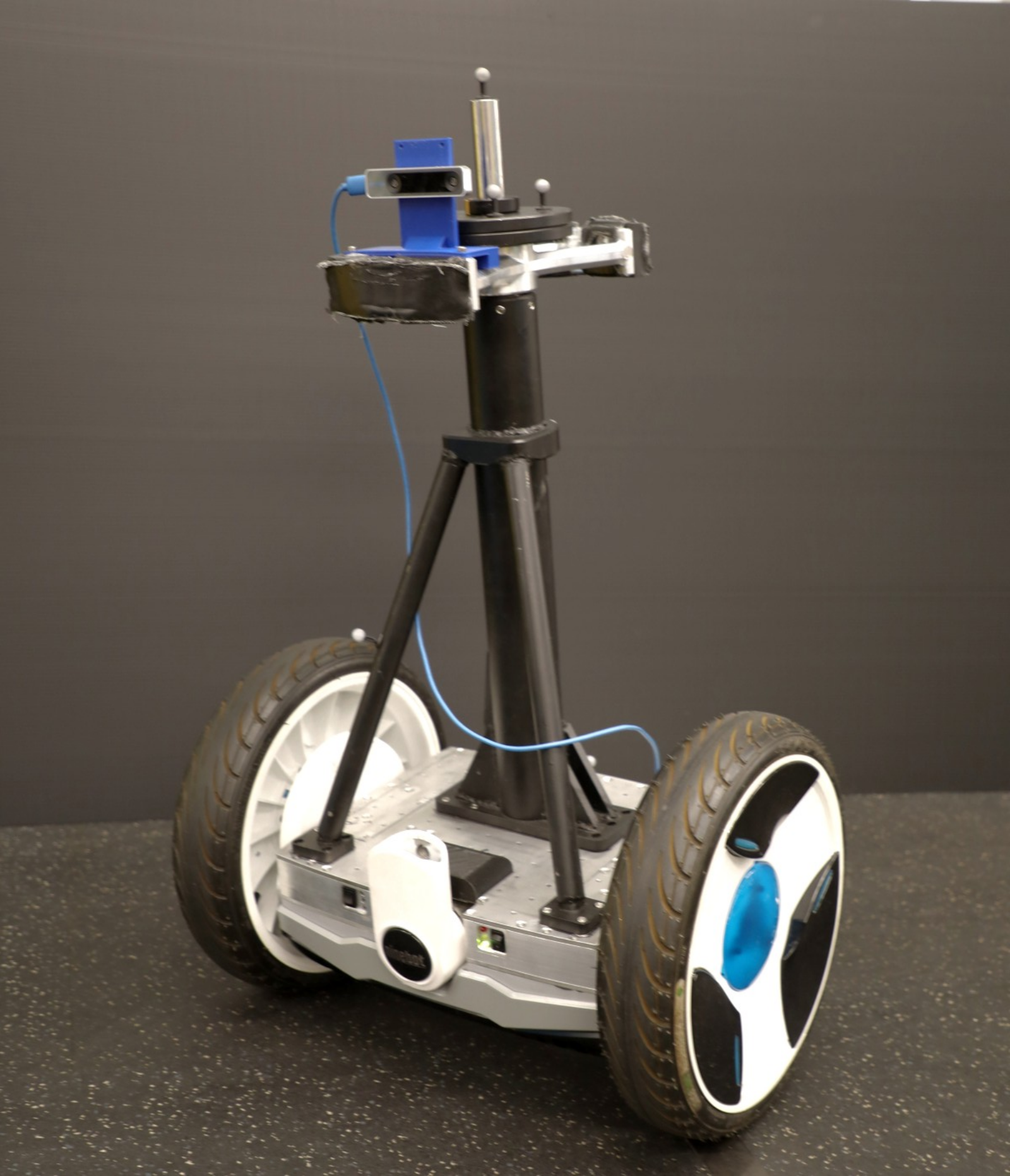}       \includegraphics[height=1.8in,width=0.37\linewidth]{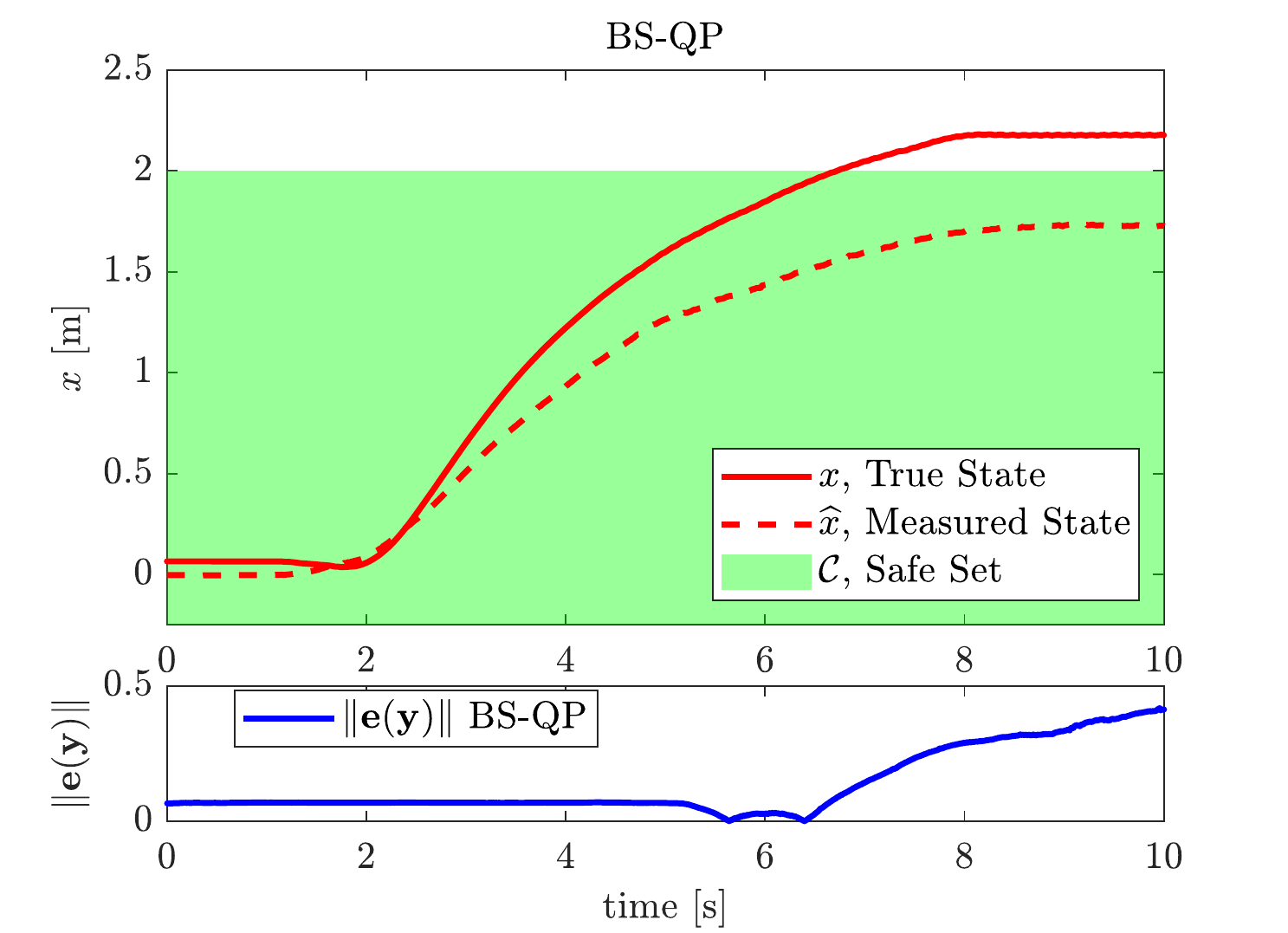}    
    \includegraphics[height=1.8in,width=0.37\linewidth]{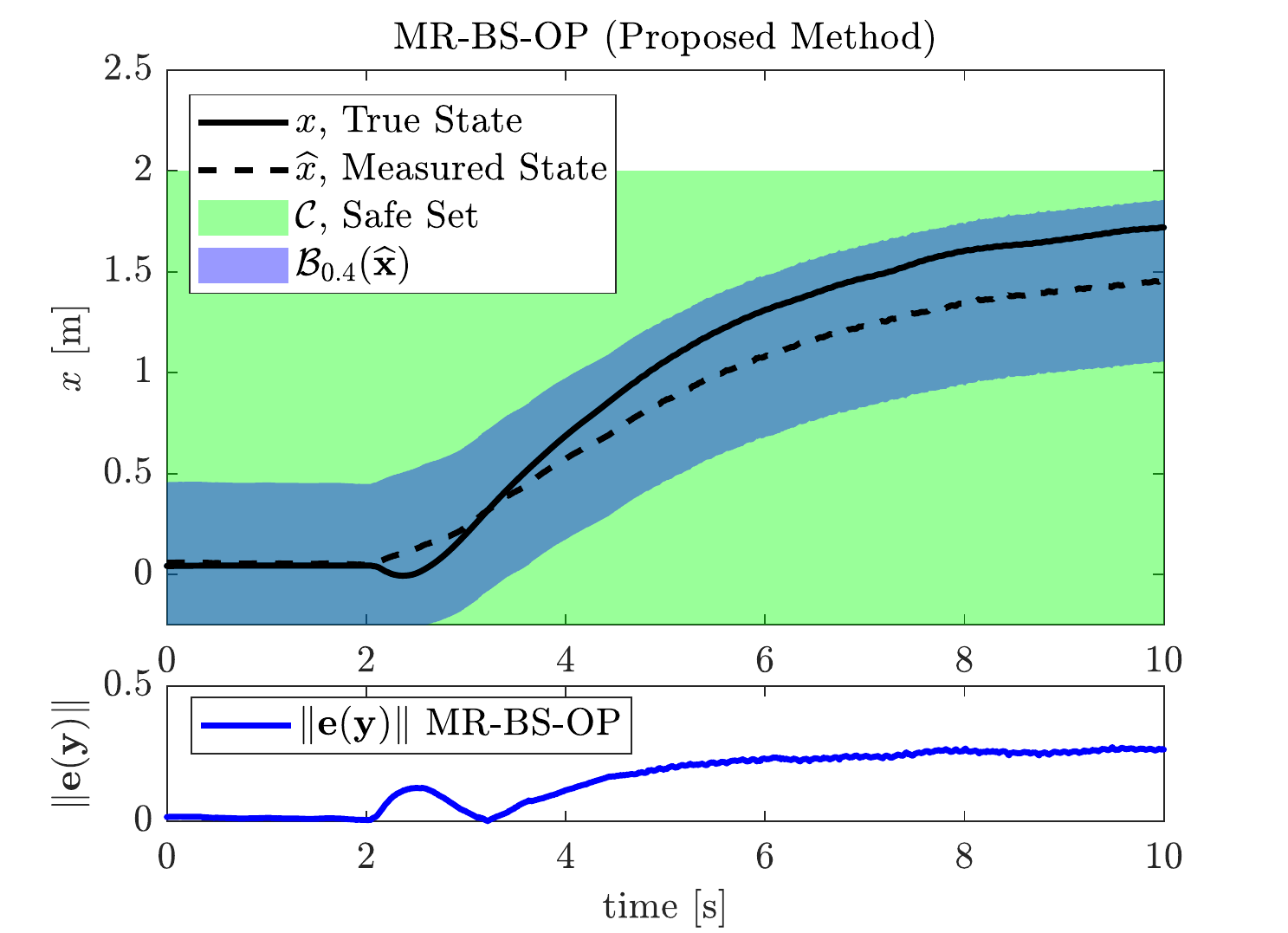}
    \caption{Experimental results using SLAM measurement model from the Intel RealSense T265 and constant desired velocity of 1 $\textrm{m/s}$. 
    \textbf{(Left)} An image of the physical Segway platform. \textbf{(Center)} Trajectories generated using the \ref{eqn:BS-QP}.
    Solid line represents the true state, dashed line shows the measured state, and green region indicates the safe set $\mathcal{C}$. The true trajectory fails to be safe and exits the safe set at $t=6.7$ s. The measurement error is plotted in blue. \textbf{(Right)} Trajectories generated using the \ref{eq:MRBS-OP}.
    An additional robustness region is plotted in blue to indicate the set of true states which the control input renders safe. Both the true and measured trajectories are safe demonstrating the robustness of the \ref{eq:MRBS-OP} when compared to the \ref{eqn:BS-QP}.}
     \label{fig:hardware}
     \vspace{-0.5cm}
\end{figure*}

This result allows us to present an alternative to the \ref{eqn:BS-QP} controller which adds the measurement-robustness of MR-CBFs. The constraints \eqref{eq:MRCBF_SDS} can be directly integrated into a Measurement-Robust Backup Set Optimization Program controller \ref{eq:MRBS-OP} as: 
\begin{align}
\label{eq:MRBS-OP}
\tag{MR-BS-OP}
\mb{k}(\mb{y},\widehat{\mb{x}}) =  \,\,&\underset{\mb{u} \in \R^m}{\argmin}    \quad \frac{1}{2} \Vert \mb{u}-\mb{k}_d(\widehat{\mb{x}}) \Vert_2^2 \nonumber \\
\mathrm{s.t.} &\quad  
L_\mb{f}\hphi_{\tau_j}(\widehat{\mb{x}}) + L_\mb{g}\hphi_{\tau_j}(\widehat{\mb{x}}) \mb{u} \nonumber\\
& \quad - (a_{\tau_j}(\mb{y}) + b_{\tau_j}(\mb{y})\Vert \mb{u} \Vert_2) \geq - \alpha( \hphi_{\tau_j}(\widehat{\mb{x}}) - \mu) \nonumber\\
& \quad L_\mb{f}\hphi_B(\widehat{\mb{x}}) + L_\mb{g}\hphi_B(\widehat{\mb{x}}) \mb{u} \nonumber\\
& \quad - (a_B(\mb{y}) + b_B(\mb{y})\Vert \mb{u} \Vert_2) \geq - \alpha ( \hphi_B(\widehat{\mb{x}})) \nonumber
\end{align}
for all $\tau_j \in \{0, \Delta_t, \dots, T \} $. This is a second-order cone program (SOCP), and there exists a wide array of solvers that are capable of implementing this controller including 
ECOS \cite{domahidi2013ecos}. Notably, the conservative nature of the method scales with the bound on the measurement-model error $\epsilon(\mb{y})$ and the \ref{eq:MRBS-OP} reduces to the \ref{eqn:BS-QP} when ${\epsilon(\mb{y}) =0}$.
We remark that the feasibility of \ref{eq:MRBS-OP} for all $\widehat{\mb{x}} \in \R^n$ can be ensured by adding a slack variable to the optimization problem.
The impact of the slack variable on safety can be understood via the concept of projection-to-state safety \cite{taylor2020control}.

\section{EXPERIMENTAL RESULTS} \label{sec:experiments}

In this section we demonstrate the efficacy of the proposed \eqref{eq:MRBS-OP} controller on a modified Ninebot E+ Segway platform in both simulation and experimentally on hardware. 

We consider a 4-dimensional asymmetrical Segway model shown in Figures \ref{fig:sim} and \ref{fig:hardware}. The state of the system consists of the position $x$, the forward velocity $\dot{x}$, the pitch angle $\psi$, and the pitch rate $\dot{\psi}$.
The equations of motion were derived using Newton-Euler method treating the Segway as an inverted pendulum with control input as torque command at the wheels; see \cite{gurriet2020scalable}.
The Backup Set method for generating control invariant sets is particularly relevant for this system due to its non-minimum phase dynamics. 

The desired safe set was chosen to be the set of states with position less than 2 m from the origin, i.e. $\mathcal{C} = \{ \mb{x} \in \R^n : x \leq 2\} $ and $h(\mb{x}) = 2 - x$. The backup controller was an LQR controller on the linearized system dynamics and the backup set was an estimate of the region of attraction of the LQR controller to the upright equilibrium state, given by a quadratic Lyapunov function. This set is then translated to match the current position of the Segway, while not allowing it to exceed the set boundary.
The functions $\overline{h}_\tau$, $\tau \in [0, T]$ were converted into four CBFs $\overline{h}_{\tau_j}$. Lastly, the Lipschitz constants for $\overline{h}_{\tau_j}$ were found explicitly by inspection of the Segway dynamics and the Lipschitz constants for $\overline{h}_B$ were found by sampling the state space in simulation and taking the largest numerical gradient. 

\begin{figure*}[!ht]
    \centering 
    \includegraphics[  width=0.98\linewidth]{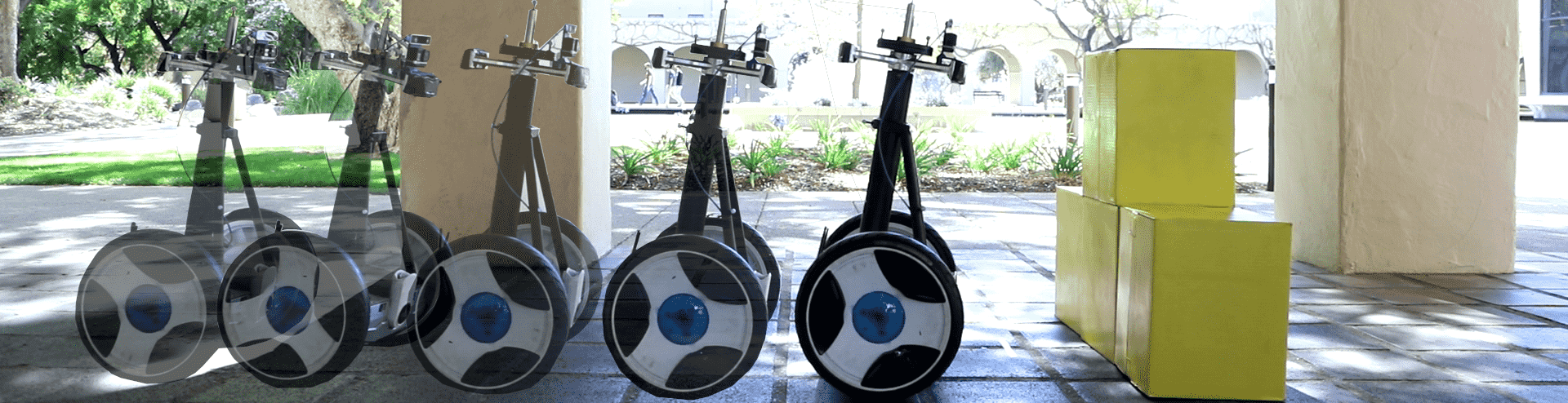}\\
    \vspace{0.1cm}
    \includegraphics[width=0.19\linewidth]{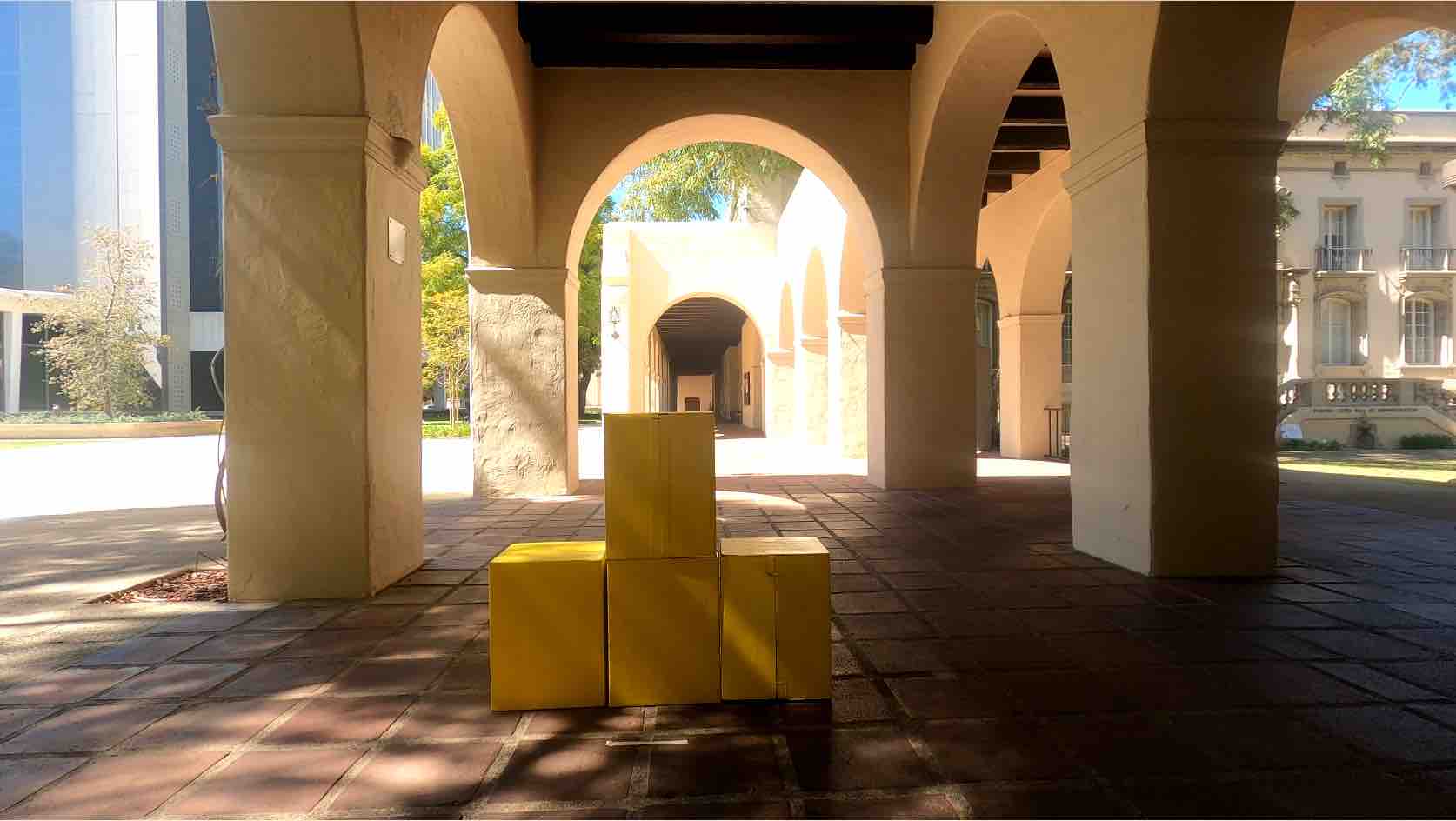}
    \includegraphics[width=0.19\linewidth]{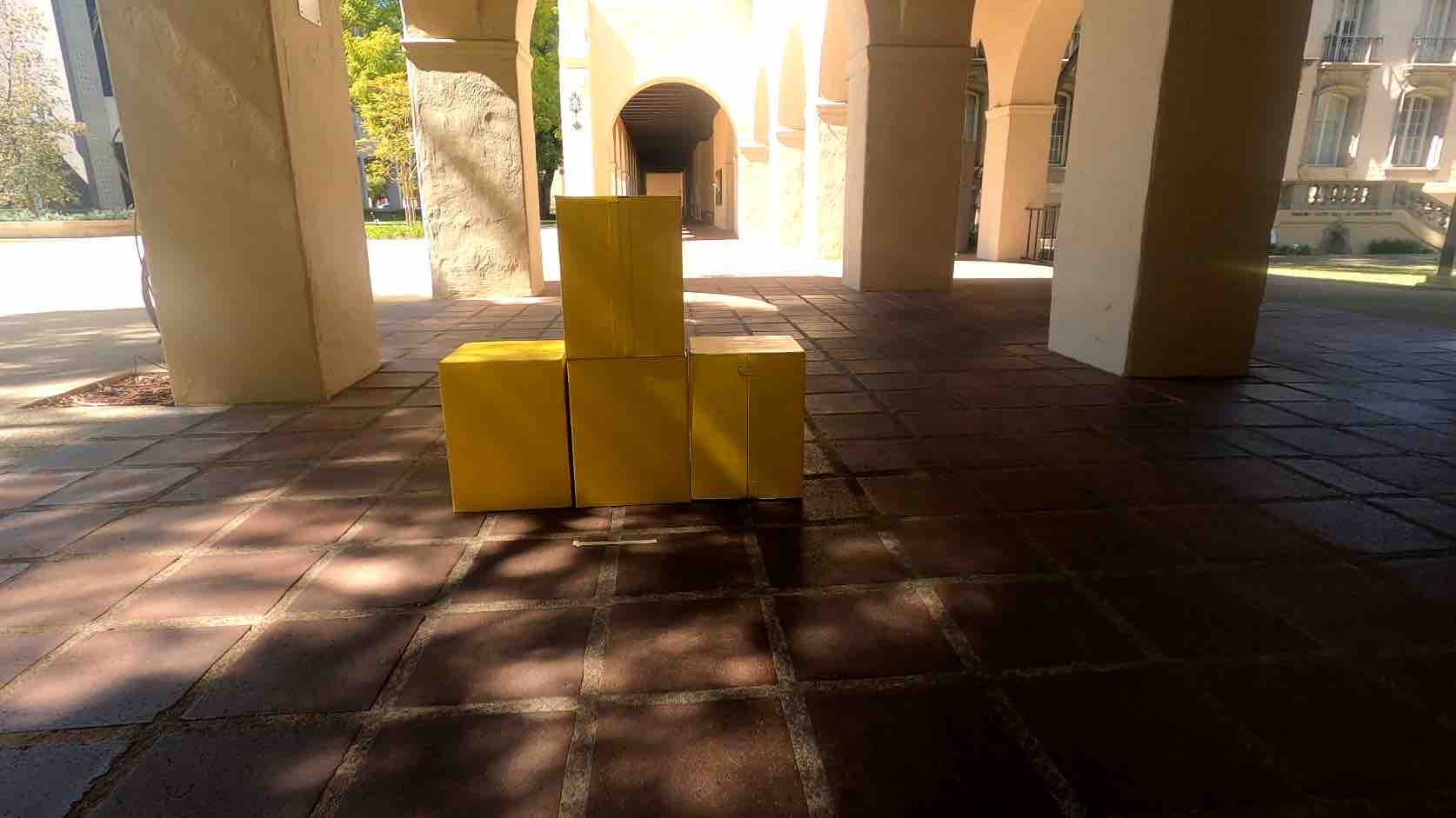}
    \includegraphics[width=0.19\linewidth]{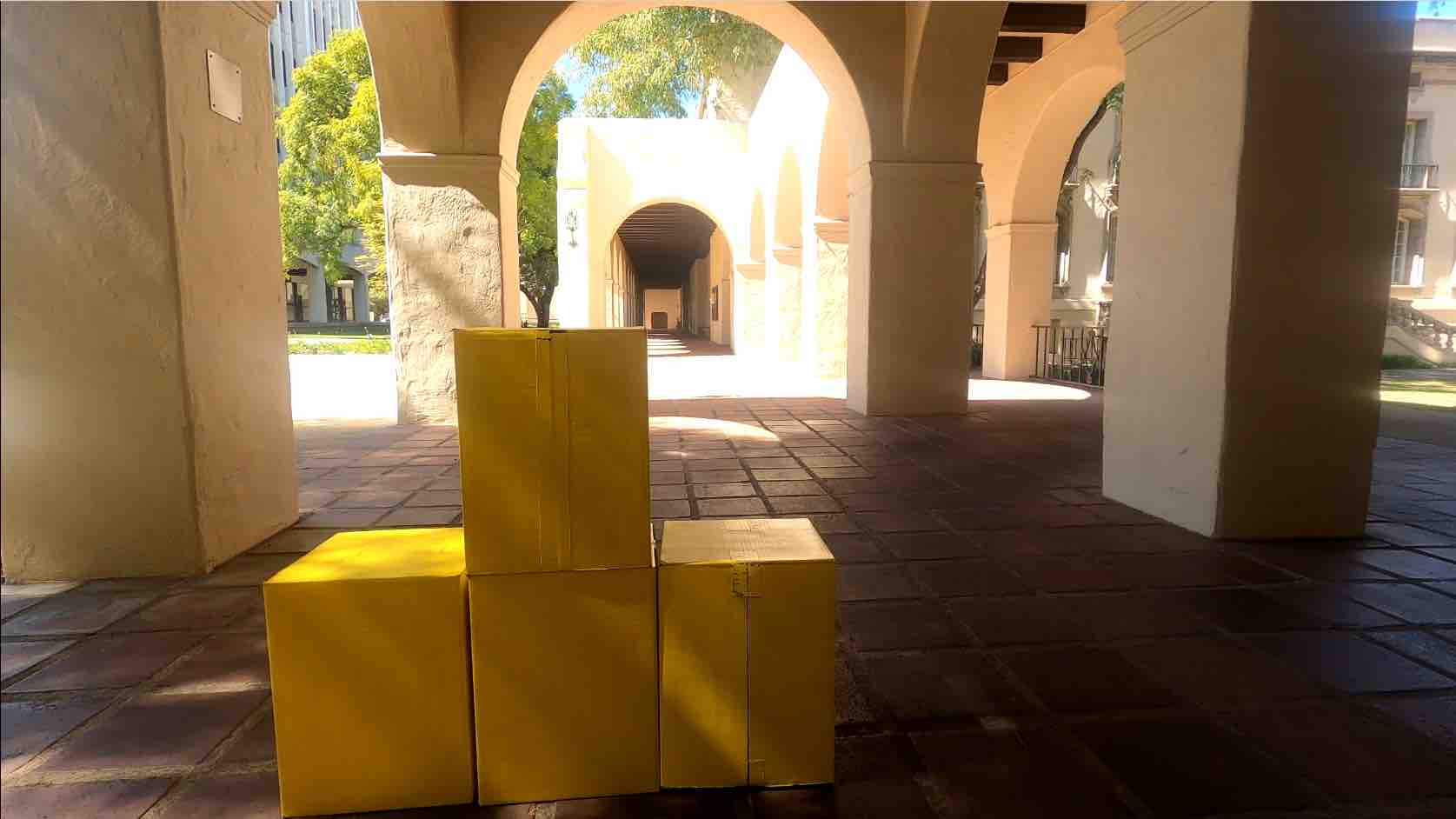}
    \includegraphics[width=0.19\linewidth]{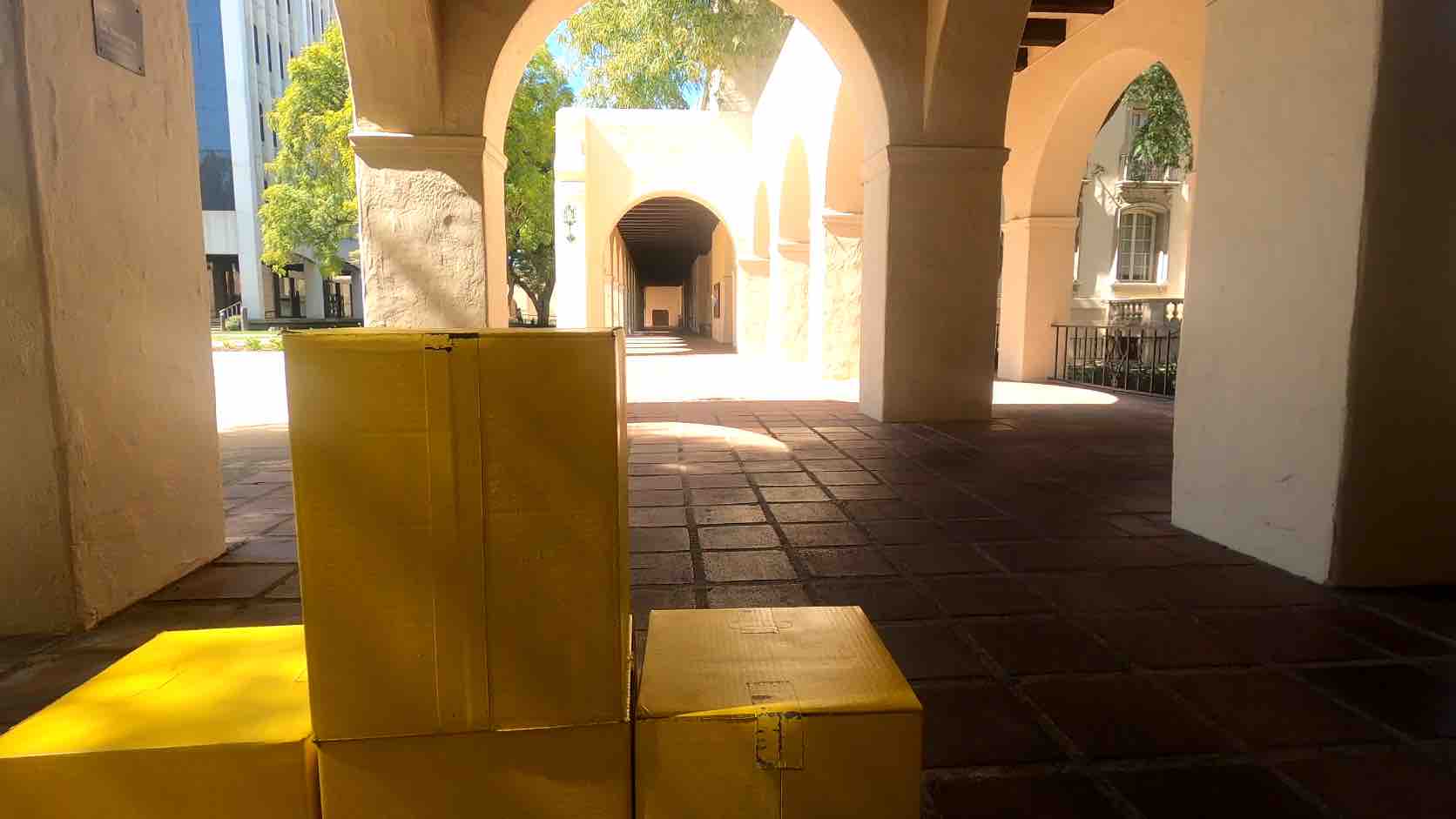}    
    \includegraphics[width=0.19\linewidth]{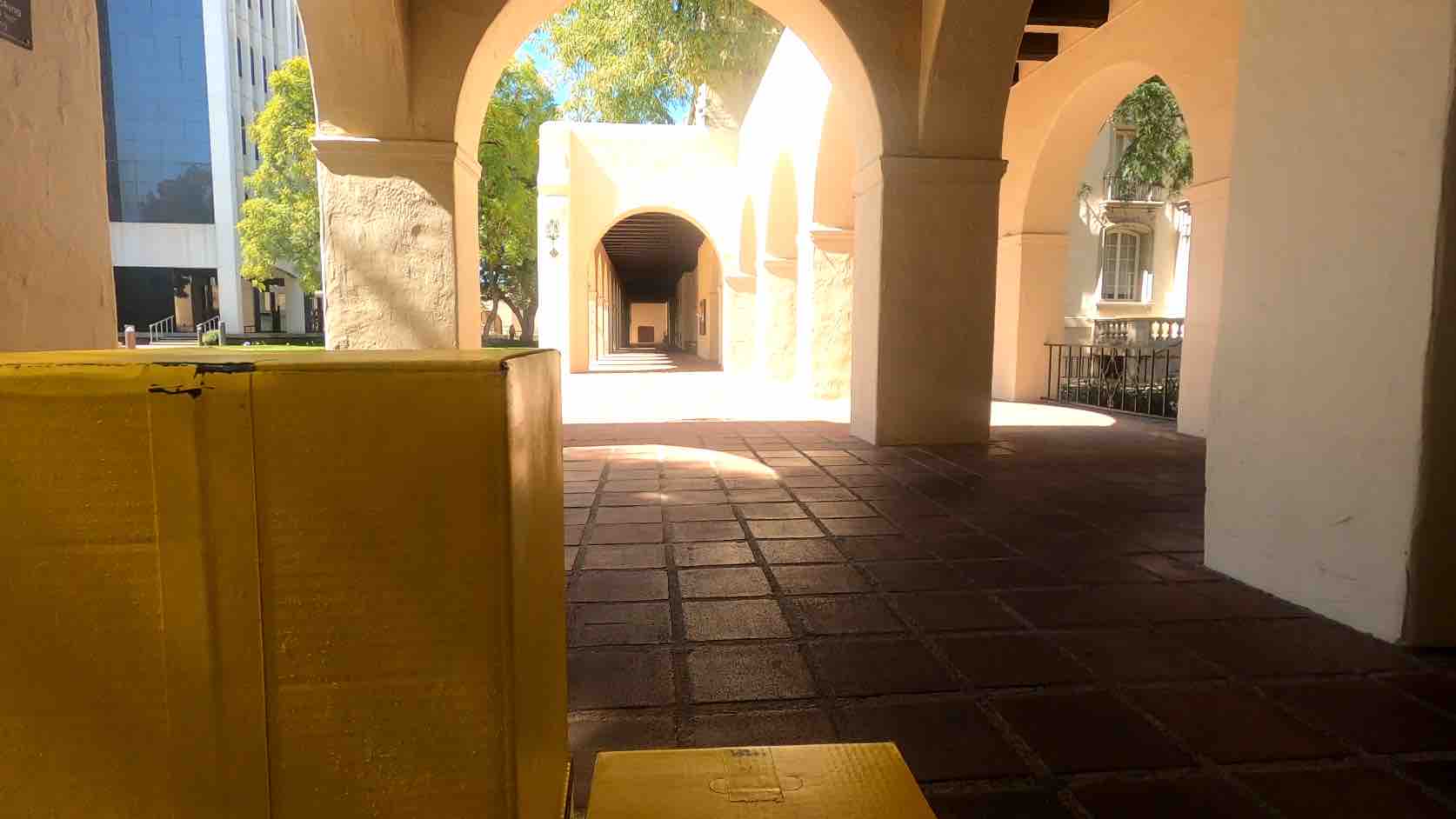}
    \caption{Images from the experiment using the \ref{eq:MRBS-OP} controller. The Segway is piloted towards a wall of yellow boxes and the controller ensures that it remains safe, i.e. that it does not crash into the boxes.  \textbf{(Top)} Time lapse of the Segway trajectory. \textbf{(Bottom)} Camera images taken from the perspective of the Segway across the experiment. The images are displayed in chronological order from left to right.}  \label{fig:outside}
    \vspace{-0.5cm}
\end{figure*} 

\vspace{0.1cm}
\noindent
\textbf{Simulation Results.}
The \ref{eq:MRBS-OP} was first validated in simulation in a ROS-based environment, found here\footnote{Simulation code \url{github.com/DrewSingletary/segway_sim}}. 
Measurement-model uncertainty was achieved by artificially adding a constant error of $-0.4$ m to the true state. The simple test scenario involved driving the Segway forward with a constant desired velocity of $1$ m/s. As seen in Figure \ref{fig:sim}, the \ref{eq:MRBS-OP} provided robustness to this error in this setting.  Importantly, without measurement-robustness, the system would be unsafe due to uncertainty in the state. 

\vspace{0.1cm}
\noindent
\textbf{Hardware Results.}
The \ref{eq:MRBS-OP} was then implemented on hardware. 
State estimates for $\dot{x}, \psi$, and $\dot{\psi}$ were found using wheel incremental encoders and a VectorNav VN-100 IMU.
The position estimate for $x$ was obtained from an Intel RealSense T265 camera running proprietary Visual Inertial Odometry (VIO) based SLAM. Onboard computation was performed by a Jetson TX2 which computes control actions and relays them to the low-level motor controllers. The TX2 concurrently runs Linux with ROS, enabling external communication and logging, and the ERIKA3 real-time operating system, which enables real-time low-level communication and computation of the control action. 

An OptiTrack motion capture system was used to provide state estimates which are considered true. These closely matched the encoder position estimates for short trials, so the encoder $x$ estimates are considered true in the outdoor experiments. This data was used to determine the 
error bound $\epsilon(\mb{y})$ that appears in the MR-CBF constraint. As the $(\dot{x}, \psi, \dot{\psi})$ state estimates provided by the encoders and IMU are highly accurate, we focus on making the system robust to measurement error in its vision-based position estimate $\widehat{x}$. 

The value $\epsilon(\mb{y}) =0.4$ was chosen as an upper bound on the measurement error for all $\mb{y} \in \mb{p}(\mathcal{C})$. 
The \ref{eq:MRBS-OP} was implemented at the embedded level in the ERIKA3 operating system using the ECOS \cite{domahidi2013ecos} SOCP solver. The desired controller $\mb{k}_d$ was a proportional-derivative controller tracking user velocity inputs. The backup trajectory $\phi_\tau^{\mb{k}_B}(\widehat{\mb{x}})$ and its partial derivatives were approximated via Euler integration using a time step of $\Delta t = 5$ ms and the time used to expand the backup set $\mathcal{C}_B$ to $\mathcal{C}_I$ was $T=1$~s. 
The \ref{eq:MRBS-OP} ran at 250 Hz with 5 decision variables, 4 linear constraints, and 6 second order cone constraints and saturated at $\pm 20~ \textrm{N}\textrm{m}$. 

To demonstrate the method, a simple scenario is executed on the Segway in which it is 
driven forward at its maximum velocity of 1 m/s. This scenario is performed with both the \ref{eqn:BS-QP} and the \ref{eq:MRBS-OP}. The results of these experiments can be found in Figure \ref{fig:hardware},  images from the experiment can be seen in Figure \ref{fig:outside}, and a 
video can be found at \cite{videoVideo}. With the \ref{eqn:BS-QP} controller the estimated state $\widehat{\mb{x}}$ remains safe, but the true state $\mb{x}$ becomes unsafe whereas with the \ref{eq:MRBS-OP} controller
both the estimated and the true state are kept safe.
This highlights the importance of providing robustness against measurement uncertainty, as achieved by Theorem~\ref{thm:main}. 




\section{CONCLUSION}
\label{sec:conclusion}

This paper established robust controller synthesis with formal safety guarantees for systems relying on uncertain measurements.
We approached this problem through the framework of CBFs.
We additionally highlighted the importance of control invariant sets and experimentally implemented the Backup Set method to produce such a set for a Segway.
Our theoretical construction culminated in the integration of the Backup Set method with MR-CBFs, which provides robustness to state measurement uncertainty in the safety guarantees.
We implemented the proposed control method on a Segway platform and demonstrated robustly safe operation in experiments. 
Future work includes addressing feasibility of the \ref{eq:MRBS-OP} for general systems in the presence of probabilistic error bounds and developing methods to determine the Lipschitz constants in its constraints.











\bibliographystyle{IEEEtran}
\bibliography{cosner_main}

\end{document}